\documentclass[conference,10pt]{IEEEtran}

\usepackage[utf8]{inputenc}
\usepackage[T1]{fontenc}% optional T1 font encoding

% *** GRAPHICS RELATED PACKAGES ***
%
\ifCLASSINFOpdf
% \usepackage[pdftex]{graphicx}
% declare the path(s) where your graphic files are
% \graphicspath{{../pdf/}{../jpeg/}}
% and their extensions so you won't have to specify these with
% every instance of \includegraphics
% \DeclareGraphicsExtensions{.pdf,.jpeg,.png}

\else
%\usepackage[tight, footnotesize]{subfigure}
% or other class option (dvipsone, dvipdf, if not using dvips). graphicx
% will default to the driver specified in the system graphics.cfg if no
% driver is specified.
% \usepackage[dvips]{graphicx}
% declare the path(s) where your graphic files are
% \graphicspath{{../eps/}}
% and their extensions so you won't have to specify these with
% every instance of \includegraphics
% \DeclareGraphicsExtensions{.eps}
\fi

\usepackage{graphicx}
\usepackage{algorithm}
\usepackage{epstopdf}
\usepackage{footnote}
\usepackage{tikz}
\usepackage{subfigure}
\usepackage{cuted}
%\usetikzlibrary{shapes,arrows}
%\usepackage{caption}

\graphicspath{{./Figures/}}

\usepackage[cmex10]{amsmath}
\usepackage{amsfonts,latexsym,amssymb,amsthm}
\usepackage{epsfig,graphics,color}
\usepackage{setspace}
\usepackage{multirow}
\usepackage{cite}
\usepackage{algorithm, algpseudocode}
\usepackage{mathrsfs}
\usepackage{bm}
\usepackage{enumitem}
\usepackage[normalem]{ulem}

\setlist[itemize]{noitemsep, topsep=0pt}

% *** Do not adjust lengths that control margins, column widths, etc. ***
% *** Do not use packages that alter fonts (such as pslatex).         ***
% There should be no need to do such things with IEEEtran.cls V1.6 and later.
% (Unless specifically asked to do so by the journal or conference you plan
% to submit to, of course. )

\newcommand{\N}{\mathcal{N}}

%Combined Aliases

\newtheorem{theorem}{Theorem}
\newtheorem{lemma}{Lemma}

\newtheorem{remark}{Remark}

\usepackage{soul}

%%algorithm

\title{Efficient Hamiltonian Reduction for Quantum Annealing on SatCom Beam Placement Problem}
\author{Thinh Q. Dinh$^\dagger$, Son Hoang Dau$^\ddagger$, Eva Lagunas$^\dagger$,  
and Symeon Chatzinotas$^\dagger$\\
$^\dagger$University of Luxembourg, Luxembourg\\
$^\ddagger$RMIT University, Melbourne}

\begin{document}

\maketitle
\begin{abstract}
Beam Placement (BP) is a well-known problem in Low-Earth Orbit (LEO) satellite communication (SatCom) systems, which can be modelled as an NP-hard clique cover problem. Recently, quantum computing has emerged as a novel technology which revolutionizes how to solve challenging optimization problems by formulating Quadratic Unconstrained Binary Optimization (QUBO), then preparing Hamiltonians as inputs for quantum computers. In this paper, we study how to use quantum computing to solve BP problems. However, due to limited hardware resources, existing quantum computers are unable to tackle large optimization spaces. Therefore, we propose an efficient Hamiltonian Reduction method that allows quantum processors to solve large BP instances encountered in LEO systems. We conduct our simulations on real quantum computers (D-Wave Advantage) using a real dataset of vessel locations in the US. Numerical results show that our algorithm outperforms commercialized solutions of D-Wave by allowing existing quantum annealers to solve $17.5$ times larger BP instances while maintaining high solution quality. Although quantum computing cannot theoretically overcome the hardness of BP problems, this work contributes early efforts to applying quantum computing in satellite optimization problems, especially applications formulated as clique cover/graph coloring problems.

\end{abstract}
\section{Introduction}

Satellite communication systems are undergoing a business and technological evolution linked to the growing interest in Low-Earth Orbit (LEO) systems. With their ability to provide response to the soaring high speed demand in remote locations, their shorter round trip delay and their reduced manufacturing and launching cost, LEO satellite communication systems are currently being deployed in large scale \cite{9755278}.

The design and management of satellite communication systems typically results in complex optimization problems involving different degrees of freedom \cite{Flor_2022}. Focusing on the LEO satellite scenario, one of the key design challenges is the so-called ``Beam Placement'' (BP) problem, which seeks an optimal user-to-beam allocation. 
While well-established Geostationary (GEO) satellites ensure coverage over specific region via a fixed and regular grid of equally-spaced beams, in the LEO case a dynamic beam placement needs to be calculated in order to point the satellite beams towards the users on Earth. Operators may prefer to reduce the number of beams generated on-board so that the beamforming network is kept as simple as possible. The works in \cite{pachler2021static,Alinque_2020} represent the latest approaches to the beam placement problem. Both works faced the trade-off between minimizing the number of beams and maximizing the provided gain to users. 
However, what is relevant here is that both \cite{pachler2021static} and \cite{Alinque_2020} formulate the beam placement problem as a minimum clique cover problem, which is an NP-hard problem and therefore intractable for a classical computer to solve. Given the emergence of commercial quantum computers, we investigate the possibility of using quantum computing for this problem.

Quantum computers are expected to execute challenging computational tasks significantly faster than any classical computers. Instead of
processing information with classical binary bits, \textit{qubits} are used in quantum computers which are able to explore combinations of quantum states simultaneously  by leveraging superposition of quantum states\cite{KishorRevModPhys2022}. This reveals the parallel processing capability of quantum computing. Early theoretical results showed promising achievements. For examples, Shor's algorithm \cite{shor1994algorithms} can break the RSA encryption and  Grover's algorithm \cite{grover1996fast} can quadratically speed up unstructured search problems. There are two main quantum processor technologies, the Quantum Gate processors, such as IBM, IonQ, and Rigetti \cite{Elijah2022}, and Quantum Annealers, D-Wave \cite{Dwave}. The former is to make general purpose quantum computers while the later is designed specifically to solve optimization problems. To solve combinatorial problems, Quadratic Unconstrained Binary Optimization (QUBO) or  Ising formulations  are used as the formulation language \cite{AndrewIsingModel}. Then, Hamilitonians are prepared as inputs of quantum computers. Currently, only Quantum Annealers have large enough qubits for solving various real-life applications such as  car production \cite{YarkoniCQE2021}, 
 air traffic management \cite{StollenwerkTITS2020}, computer vision \cite{Dzung22hqc}  and wireless communications \cite{MinsungSigCom2019}.
%may be add TabiZsoltI2021EoQA if we have more space

However, current quantum annealer technologies has a major limitation. %many limitations.
Even with $5000$+ physical qubits in latest D-Wave Advantages processors, the current quantum hardwares limit how large an optimization problem could be solved. For example, existing D-Wave Advantage can only solve Quadratic Assignment Problems with less than $200$ variables \cite{Kuramata2021}. 
%Thus, \hoang{despite the recent progress in quantum computing technology,} it \hoang{remains very} challenging to solve large-scale BP problems on quantum computers.
%Currently, qubits reduction techniques \bl{have become important towards addressing} that limitation.
A few approaches have been developed to tackle this issue.
Roof Duality, which is the most well-known method, implemented in D-Wave Ocean SDK \cite{RotherCVPR2007,Dwave}, is a linear relaxation based method that partially assigns values for a subset of binary variables, hence reducing the number of unknown variables. D-Wave also implemented Qbsolve, which decomposes large QUBO instances into smaller ones. However, this tabu-search-based function was no longer supported after March 2022 \cite{Dwave}. 
%There is also another approach, namely FastHare, that finds non-separable groups, in which variables have identical values in the optimal solution, 
A very recently proposed approach, FastHare, finds non-separable groups in which variables have identical values in the optimal solution
and merges the variables of a group into one logical qubit \cite{ThaiCQE2022}. 
%From experimental results \cite{ThaiCQE2022}, those works worked well on some set of problems while they do not fit to others. 
As observed through extensive experiments carried out in \cite{ThaiCQE2022}, the aforementioned approaches seemed to work well on a subset of QUBO instances while did not fit others.
%Therefore, the possibility to apply them in satellite communication problems \bl{needs} to be studied. 
As such, the relevance of those approaches to satellite communication problems, including beam placements, requires a rigorous investigation.
%Moreover, 
In addition, in the aforementioned works, their original problems are QUBOs, unconstrained formulations. However, BP problems are constrained. As a result, the feasibility of the original problems has to be further investigated. 

In this paper, we first introduce the QUBO formulation of BP problems which can be used to fit in quantum annealers. To tackle quantum annealers' variable limit, rather than preparing Hamiltonians from QUBO formulations of an original Beam Placement optimization instance, we propose an efficient Hamiltonian reduction method to construct a small-size ``equivalent'' Hamiltonian which can fit into quantum annealers. This method includes two parts: variable presolve and Hamiltonian formulation. The former %part 
is a two-step presolve method based on graph operations and linear relaxations, which 
%This method 
could presolve most variables of the Beam Placement instance. Then, using the remaining variables, a reduced Hamiltonian is constructed as an input of quantum annealers. 
  We perform the first medium-scale benchmarks for the proposed process on various aspects based on US vessel location data assuming that each vessel is a satellite communication subscriber. We first highlight that our method can significantly reduce the number of logical qubits needed to construct a Hamiltonian. It can presolve more than $99\%$ variables in most of cases while D-Wave Roof Duality cannot presolve any variables. Moreover, our solutions allows D-Wave Advantages to solve problems whose number of users is $17.5$ times higher while maintaining the feasibility of the final solutions at $91.5\%$. Last but not least, numerical results returned by D-Wave Advantage based on our reduced Hamiltonians show better beam placement performance as compared with classical counterpart, indicating the potentials of applying quantum annealing for satellite problems, especially 
  %applications of 
  when formulated as a clique cover problem. Since a clique cover of a graph is a vertex coloring of its complement and vice versa \cite{Karp72}, this approach also demonstrates the potential of solving applications of the clique cover and graph coloring problems in satellite networks. 

% Our contribution in this paper is provided as follow
% \begin{itemize}
%     \item[1.] We first introduce the QUBO formulation of Beam Placement which can be used as an input of quantum annealers. 
%     \item[2.] Due to current and near future annealers' variable limit, we then propose  an efficient Hamiltonian reduction method to construct a small-size ``equivalent'' Hamiltonian which can fit Beam Placement instances into quantum annealers. We also introduce a tight lower bound of beam placement problems which can be used to evaluate the numerical performance of algorithms.
%     \item[3.] Finally, we propose a Hamiltonian formulation that highly ensure the quality of quantum solution including: feasibility and near-optimal performance with $17$-$20$ times larger the instances can fit to D-Wave.
% \end{itemize}

%%%%%%%%%%%%
%%%%%%%%%%%%%
\section{Quantum Annealing and QUBO/Ising Formulation}

Quantum annealers including D-Wave's are quantum computers
that solve optimization problems through energy minimization of a physical system of $S$ qubits\cite{Dwave,AndrewIsingModel}. The quantum system's energy profile is defined by its Hamiltonian. In Quantum Annealing (QA) \cite{AndrewIsingModel}, the quantum system is first initialized at the ground state of the initial Hamiltonian $H_0$ and then slowly evolve the system
Hamiltonian to the targeted Hamiltonian $H_P$ in a given large duration $T$ according to 
\begin{align}
    \mathcal{H}(t) = \left (1 - \frac{t}{T} \right)H_0 + \frac{t}{T}H_P,
\end{align}
where,
\begin{align}
 H_0 = \sum_{i=1}^{S} \sigma_i^x,     H_P = \sum_{i=1}^S f_i \sigma_i^z + \sum_{i,j=1}^S G_{ij} \sigma_i^z \sigma_j^z.
\end{align}
where $f_i$ and $G_{ij}$ are system parameters called bias and coupler strength, respectively; $\sigma_i^z$ and $\sigma_i^x$ are Z and X Pauli operators on the $i$th qubit \cite{AndrewIsingModel}. By measuring the ground state of the system at time $T$, it is equivalent to find the optimal solution of an Ising Model such that
\begin{align}
    \min_{s_i \in \{-1,1\}^S} H_P(\mathbf{s})=\sum_{i=1}^{S} f_i s_i + \sum_{i,j=1}^S G_{ij} s_i s_j .
\end{align}

The Ising model can be alternatively represented as Quadratic Unconstrained Binary Optimization (QUBO) such that
\begin{align}
    \min_{\mathbf{x}\in \{0,1\}^S} ~~~& \sum_{i=1}^{S} Q_{ii}x_i + \sum_{i,j=1}^S Q_{ij} x_i x_j
\end{align}
where $\mathbf{x} = [x_1, \ldots, x_S]$ are binary variables and $\mathbf{Q} \in \mathbb{R}^{S\times S} $ is an upper triangular matrix. QUBO can be easily converted back and forth to an Ising
Hamiltonian by mapping $s_i = 2x_i -1$, $f_i = \frac{1}{2} Q_{ii} + \frac{1}{4}\sum_{j=1}^{S} Q_{ij} + \frac{1}{4}\sum_{j=1}^{S} Q_{ji}$ and $G_{ij} = \frac{1}{4} Q_{ij}$.

% \begin{align}
%     s_i = 2x_i -1\\
%     f_i = \frac{1}{2} Q_{ii} + \frac{1}{4}\sum_{j=1}^{S} Q_{ij} + \frac{1}{4}\sum_{j=1}^{S} Q_{ji}\\
%     G_{ij} = \frac{1}{4} Q_{ij}
% \end{align}
%\cite{DjidjevQUBO2Isi}
% \subsection{D-Wave's Roof Duality and Hybrid Solvers}
% D-Wave is a pioneer in quantum annealing technology. They provide APIs for customers to program on their real quantum hardwares. They implement Roof Duality to pre-solve part of variables in QUBO formulations \cite{Dwave}. D-Wave also provide their Hybrid Solver to solve constrained binary problems where  Lagrange multipliers and slack variables are optimized \cite{Dwave}.

% Moreover, since most of real-world problems are constrained. To transform those problems into QUBO, correctly adding Lagrange multipliers and slack variables techniques mostly based on heuristic methods or parameter searching. Based on their data, D-Wave also provide their Hybrid Solver to solve constrained binary problems where  Lagrange multipliers and slack variables are optimized \cite{Dwave}.

%%%%%%%%%%%%%%%%%%%%%%%%%%%%%%%%
%%%%%%%%%%%%%%%%%%%%%%%%%%%%%%%%

\section{System Model and Problem Formulation}
\subsection{System Model}
\begin{figure}[t]
		%\centering
		\subfigure[]{\includegraphics[width=0.5\linewidth]{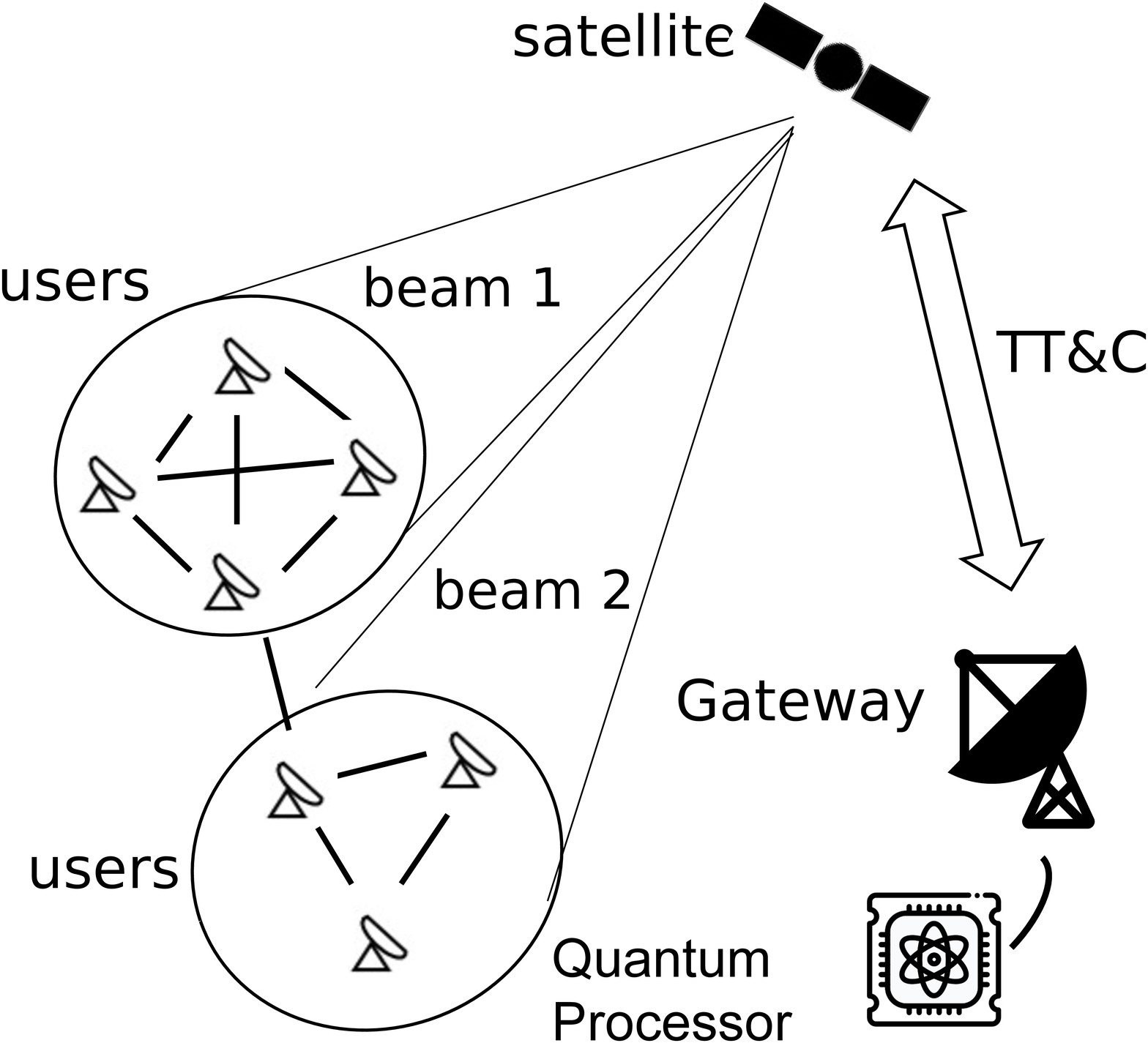}} 
		\hspace{0.2cm}
		\subfigure[]{\includegraphics[width=0.35\linewidth]{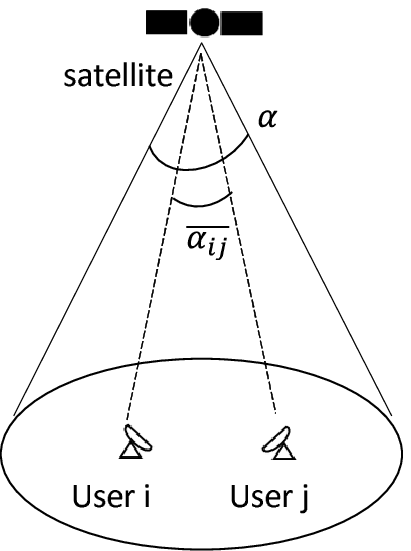}}		
		\caption{ (a) The System Model and (b) Example of one beam pattern
with two users in the beam \label{fig::system_model} }
	\end{figure}
Illustrated in Fig. \ref{fig::system_model}a, let us consider a system where there are $N$ users and a satellite equipped with an antenna array which is able to provide maximum $B$ beams, $B \leq N$, to serve its users. Let $\mathcal{N} = \{1, \ldots,N \}$ and $\mathcal{B}= \{1,\ldots,B \}$ denote the set of users indices and the set of beam indices, respectively. The satellite can access a gateway equipped with a quantum processor via  its TT \&C (telemetry, tracking, and control) link to conduct optimization operations. To model the BP problem, we make use of the User Assignment Matrix (UAM), which is defined as $\mathbf{A} = \{ (a_{ib}) \in \{0,1 \} \} $, where $a_{ib} = 1$, if user $i$ is assigned to beam $b$. Otherwise, $a_{ib} = 0$. 	

To exploit users' traffic diversity and maximize multicast benefits, the satellite aims at minimizing the number of active beams while ensuring the requested service by its users \cite{pachler2021static,Alinque_2020}. To do that, nearby users can be served by the same beam. Leveraging the results of previous works \cite{pachler2021static,bui2022joint}, to be served in the same beam, with the specific location of the satellite, any pairs of a given set of users must be separated by an angle $\Bar{\alpha} \leq \frac{\alpha}{2}$, where $\alpha$ is the cone angle of a beam. This is illustrated in Fig. \ref{fig::system_model}b. Consequently, depending on the positions of the satellite on the sky in different time slots, we have different undirected proximity graphs $G = (\N,E)$, where edge between user $i$ and user $j$ exists if $\Bar{\alpha}_{ij} \leq \frac{\alpha}{2}$. Let $z_b$ denote the indicator of whether beam $b$ is active or not. The BP problem of finding the minimum number of active beams is formulated as followed,
\begin{align}
    \textrm{P1:}\min_{\{z_b\},\{a_{ib}\}} ~~& \sum_{b \in \mathcal{B}} z_b,\\
    \textrm{s.t.}~~ & \text{C1:} ~ \sum_{b \in \mathcal{B}} a_{ib} = 1, \forall i \in \mathcal{N}, \label{constraint:C1}\\
    & \text{C2:} ~ a_{ib} \leq z_b, \forall i \in \mathcal{N}, \forall b \in \mathcal{B} \label{constraint:C2},\\
    & \text{C3:} ~ a_{ib} + a_{jb} \leq 1, \forall b, ~\textrm{if}~ (i,j) \notin E,\label{constraint:C3}\\
    & \text{C4:} ~ \sum_{i \in \mathcal{N}} a_{ib} \leq W, \forall b \in \mathcal{B} \label{constraint:C4},\\
    & \text{C5:} ~  z_b, a_{ib} \in \{0,1\}. \label{constraint:C5}
\end{align}

Here, $C1$ indicates that each user must be assigned in only one beam. $C2$ indicates that no user can be served by an inactive beam. $C3$ indicates that the angle separating two users which respects to the satellite must be smaller than $\alpha/2$ to be in the same beam. $C4$ indicates the maximum number of users that can be served by one beam.

\subsection{QUBO formulation}
Based on prior knowledge on how to bring constraints into objective \cite{AndrewIsingModel}, P1 is transformed into QUBO as follows. Due to bin-packing constraints C4, binary slack variables $s_{bw}, w \in \mathcal{W} =  \{1,\ldots,W \}$ are introduced to make $C4$ become equality. $P1$ is then transformed into $\Bar{P1}$:
\begin{align}
    \Bar{\textrm{P1}}:\min_{\substack{\{z_b\},\{a_{ib}\},\\ \{s_{b,w}\}}} ~~& \sum_{b \in \mathcal{B}} z_b,\\
\nonumber   \textrm{s.t}~~ & (\ref{constraint:C1}) - (\ref{constraint:C3}), (\ref{constraint:C5}) \\
    &  \sum_{i \in \mathcal{N}} a_{ib} + \sum_{w \in \mathcal{W}} ws_{bw} = W, \forall b \in \mathcal{B},
\end{align}
Let $\mathbf{x} = [\mathbf{a}_1^T,\ldots,\mathbf{a}_B^T,z_1,\ldots,z_B,\mathbf{s}_{1},\ldots,\mathbf{s}_{B}]^T$ denote the column vector representing variables of $\Bar{P1}$.
 First, denote $\mathbf{Q}_{C1},\mathbf{Q}_{C2},\mathbf{Q}_{C3}$ and $\mathbf{Q}_{C4}$ the matrices corresponding to C1, C2, C3 and C4. Denote $\mathbf{Q}_o$ the matrix corresponding to the objective function. We have 
\begin{align}
    \textrm{QUBO-1}:\min_{\mathbf{x}} ~~& \mathbf{x}^T \mathbf{Q} \mathbf{x},
\end{align}
where $\mathbf{Q} = \mathbf{Q}_o + \lambda \big(\mathbf{Q}_{C1}+\mathbf{Q}_{C2}+\mathbf{Q}_{C3} + \mathbf{Q}_{C4}\big)$ and
\begin{align}
  \mathbf{Q}_o & = \textrm{diag}\Big(\Big[\mathbf{0}_{1\times(NB)},\mathbf{1}_{1\times B},\mathbf{0}_{1\times(WB)} \Big]\Big), \\
 %  & \hspace{-0.5cm} \text{represents } \sum_{b\in \mathcal{B}} z_b, \nonumber \\
     \mathbf{Q}_{C1} & = \mathbf{A}_1^T\mathbf{A}_1 - 2\textrm{diag}\Big( \mathbf{d}_1^T\mathbf{A}_1\Big), \\
  %   & \hspace{-0.5cm} \text{represents } \sum_{i \in \mathcal{N}} \Bigg ( 1- \sum_{b\in \mathcal{B}} a_{ib}\Bigg)^2, \nonumber \\   
  \nonumber   \mathbf{Q}_{C2} &= \textrm{diag}\Big(\Big[\mathbf{1}_{1\times(NB)},\mathbf{0}_{1\times(B+WB)} \Big]\Big)
    \\
    & ~~~+ \begin{bmatrix}
     \mathbf{0}_{NB \times NB},\hspace{0.7cm} \mathbf{A}_2,\hspace{1.4cm}\mathbf{0}_{NB\times WB}\\
     \mathbf{0}_{(B+BW) \times NB},\mathbf{0}_{(B+BW)\times B},\mathbf{0}_{(B+BW)\times WB}
     \end{bmatrix}, \\
   %  & \hspace{-0.5cm}\text{represents } \!\sum_{i \in \mathcal{N}} \sum_{b\in\mathcal{B}} \Bigg(a_{ib} - a_{ib}z_b \Bigg), \nonumber \\
    \mathbf{Q}_{C3} &= \textrm{diag}\Big(\Big[\underbrace{ \overline{\mathbf{F}}, \ldots ,\overline{\mathbf{F}}}_{B~\textrm{times}}, \mathbf{O}_{(B+WB)\times(B+WB)}\Big]\Big),\\
 %   & \hspace{-0.5cm} \text{represents } \!\sum_{b\in \mathcal{B}} \sum_{(i,j) \notin E} \Bigg(a_{ib} a_{jb} \Bigg)=\sum_{b\in \mathcal{B}} \sum_{(i,j)} \overline{\mathbf{F}}_{ij} a_{ib} a_{jb}, \nonumber \\
     \mathbf{Q}_{C4} & = \! \mathbf{A}_3^T\mathbf{A}_3 - 2\textrm{diag}\Big( \mathbf{d}_2^T\mathbf{A}_3\Big),
 %& \hspace{-0.5cm} \text{represents } \!\sum_{b\in \mathcal{B}} \sum_{b\in \mathcal{B}} \! \Bigg (\! \sum_{w \in \mathcal{W}} ws_{bw}- \sum_{i \in \mathcal{N}} a_{ib} \! \Bigg)^2, \nonumber      
\end{align}
where
\begin{align}
    \mathbf{d}_1 & = \mathbf{1}_{N \times 1}, \\
    \mathbf{A}_1 & =[\underbrace{ \mathbf{I}_{N \times N}, \ldots ,\mathbf{I}_{N \times N}}_{B~\textrm{times}}, \mathbf{0}_{N \times B}, \mathbf{0}_{N \times W}],  \\
    \mathbf{d}_2 & = W\mathbf{1}_{N \times 1}, \\
  \mathbf{A}_2 &= \begin{bmatrix}
  -\mathbf{1}_{N \times 1},~\mathbf{0}_{N \times 1}~,\ldots,~~\mathbf{0}_{N \times 1}\\
  ~\mathbf{0}_{N \times 1},-\mathbf{1}_{N \times 1}~,\ldots,~\mathbf{0}_{N \times 1}\\
   \vdots \\
 \underbrace{ ~\mathbf{0}_{N \times 1},~\mathbf{0}_{N \times 1},~\ldots, -\mathbf{1}_{N \times 1}}_{B~\textrm{columns}}
  \end{bmatrix},\\
     \mathbf{A}_3 &=   \begin{bmatrix}
    \mathbf{1}_{1 \times B}, \mathbf{0}_{1 \times B}, \ldots, \mathbf{0}_{1 \times B},\mathbf{0}_{1 \times B},1,\ldots,W\\
   \mathbf{0}_{1 \times B}, \mathbf{1}_{1 \times B}, \ldots, \mathbf{0}_{1 \times B}, \mathbf{0}_{1 \times B},1,\ldots,W\\
   \vdots \\
   \underbrace{\mathbf{0}_{1 \times B}, \mathbf{0}_{1 \times B}, \ldots, \mathbf{1}_{1 \times B} }_{NB~\textrm{columns}}, \mathbf{0}_{1 \times B}, 1,\ldots,W
  \end{bmatrix}.
\end{align}
Here, $\mathbf{I}$ is an identity matrix with ones on the main diagonal and zeros elsewhere, and $\overline{\mathbf{F}}$ is the $N$-by-$N$ adjacency matrix of the complement of $G$. Moreover, $\textrm{diag}(\cdot)$ creates a (block) diagonal matrix whose main diagonal is its argument. Based on 
\cite{AndrewIsingModel}, we set the value of $\lambda = B+1$.
Then, to obtain upper-triangular matrix $\mathbf{Q}$ accepted by D-Wave solvers, $Q_{ij} = Q_{ij} + Q_{ji}, \forall j>i$ and $Q_{ij} = 0, \forall j<i$.
%%%%%%%%%%%%
%%%%%%%%%%%%%
%\vspace{-5pt}

\subsection{Curse of Dimensionality}
In contrast to classical computers, which requires $NB+B$ variables to solve $P1$, to construct QUBO-1 for quantum annealer, $P1$ requires $NB +B + WB$ logical qubits, which may not fit to recent quantum computer (D-Wave), in which $5000+$ physical qubits are required to represent less than $200$
%physical 
logical qubits \cite{DwaveUplimit}. Thus, even when we consider a small instance with $12$ users and $12$ beams where each beam can cover $5$ users, the total number of logical qubits required is $216$, which already exceeds D-Wave's capacity. We can only use D-Wave Advantage when $N = 10$, $B=10$ and $W=5$.

% 	\begin{figure*}
% 		\centering
% 		\subfigure[]{\includegraphics[width=0.3\linewidth]{}}
% 		\subfigure[]{\includegraphics[width=0.3\linewidth]{}}
% 		\subfigure[]{\includegraphics[width=0.3\linewidth]{}}		
% 		\caption{Toy examples of $10$ users. (a) Network graph, (b) Best Fit result with $5$ beams and (c) CQM result with $7$ beams \label{fig::toy} }
% 	\end{figure*}
% In Fig. \ref{fig::toy}, with a toy example, we show the limitation of D-Wave Roof Duality as well as D-Wave Constrain Quadratic Model to solve a beam placement problem with $10$ nodes. For this example, even with small number of nodes, Roof Duality cannot reduce any variables. Moreover,  using Gurobi Branch and Bound, we can get the optimal beam number is $4$. Here,	D-Wave CQM also shows poor performance with $7$ beams in comparison with Best Fit with $5$ beams. 

\section{Hamiltonian Reduction}
In order to reduce the size of Hamiltonians, in this section, we propose a solution consisting of variable presolve and reduced Hamiltonian formulation. For variable presolve, we first find a subset of users who cannot be served by the same beam by identifying a maximum independent set in the proximity graph~$G$. We assign each of such users to a different beam. %Then, based on that assignment, we continue to solve $P1$ by solving its linear relaxation formulation. 
%After this step, most users are expected to be assigned to a beam. There are few users who assignment decisions are fractional. Finally, we construct a Hamiltonian for this subset of users, then use Quantum Annealing to solve the rest variables.
Next, we obtain $P2$ as a reduced version of $P1$ with remaining users and solve a linear relaxation of $P2$, which successfully assigns many users $i$ to their respective beams $b$ (when $a_{ib}=1$). Finally, we construct a reduced Hamiltonian and use QA to find beam allocations for the unassigned users. 

\subsection{Two-step Variable Presolve}
\subsubsection{Find Maximum Independent Set}
Let $\N_{\textrm{ind}}$ be the set of users corresponding to a large independent set of $G$.
Note that finding a maximum independent set of a graph is an NP-hard problem. We can use any existing method in the literature to find an approximation of a maximum independent set instead, e.g., the one by Boppana and Halldórsson~\cite{Boppana1992}. However, in this work, as a proof-of-concept, we employ a simple and fast greedy algorithm to find $\N_{\textrm{ind}}$ (see Algorithm~\ref{algorithm:find_max_ind_set}).
%To find $N_{\textrm{ind}}$, there are many methods. One classical approach is to find the maximum clique in the complementary graph $\Bar{G}$. Here, finding the maximum clique is also a well-known NP-Hard problem. We use a well-known approximation technique of Boppana and Halldórsson \cite{Boppana1992} implemented in package \textit{networkx} in Python to find the maximum clique of $\Bar{G}$. However, we propose another method to find $N_{\textrm{ind}}$ whose running time is empirically smaller than the classical method. This procedure is summarized in Algorithm \ref{algorithm:find_max_ind_set}.

		\begin{algorithm}    [t]                % enter the algorithm environment
			\caption{Finding an independent set \label{algorithm:find_max_ind_set}}          % give the algorithm a caption
			                           % and a label for \ref{} commands later in the document
			{%\footnotesize		
				\begin{algorithmic}[1]                    % enter the algorithmic environment
					\Require $G = (\N,E)$ - the proximity graph %$N$
					\Ensure $\N_{\textrm{ind}}$ - an independent set of $G$
					\State $\N_{\textrm{ind}} = \emptyset$,
					\While {$\N \neq \emptyset$}
					\State $n^*= \arg \min_{n \in \N}\textrm{Node Degree}(n)$, 
					\State  $\N_{\textrm{ind}} = \N_{\textrm{ind}} \cup  \{n^*\}$, 
					\State $\N = \N \setminus\big( \{n^*\} \cup \textrm{Neighbor Set}(n^*)\big)$,
					\EndWhile
				\end{algorithmic}}
			\end{algorithm}

Once pre-assigning users in $\N_{\textrm{ind}}$ to $|\N_{\textrm{ind}}|$ different beams, with the set of remaining users $\N \setminus \N_{\textrm{ind}}$, %where beam assignment decisions are undecided. 
 %Once finding the maximum independent set and assigning users in that set to different beams, $P1$ is equivalent to
 the problem $P1$ is now reduced to the following problem.
\begin{align}
    \textrm{P2:}\min_{\{z_b\},\{a_{ib}\}} ~~& \sum_{b\in \mathcal{B}} z_b,\\
 \nonumber   \textrm{s.t}~~ & (\ref{constraint:C1}) - (\ref{constraint:C4}), \\
    &  z_b, a_{ib} \in \{0,1\}, \forall i,j \in \N \setminus \N_{\textrm{ind}}.
\end{align}

%{\color{red} Thinh: Hoang! Please help me to provide some theories on finding maximum independent set. Prove that the if we allocate users in maximum independent set different beams, the optimal solution in $P2$ always include in the optimal solution of the original problem $P1$}

\begin{lemma} \label{lemma:max_ind_set}
Let $\mathrm{OPT}(P1)$ and $\mathrm{OPT}(P2)$ denote the optimal value of $P1$ and $P2$. By pre-assigning users in $\N_{\textrm{ind}}$ to $|\N_{\textrm{ind}}|$ different beams, the optimal solution in $P2$ is still optimal in $P1$, i.e., $\mathrm{OPT}(P1)= \mathrm{OPT}(P2)$.
\end{lemma}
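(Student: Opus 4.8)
The plan is to prove the two inequalities $\mathrm{OPT}(P1) \le \mathrm{OPT}(P2)$ and $\mathrm{OPT}(P1) \ge \mathrm{OPT}(P2)$ separately. The entire content rests on one structural observation about independent sets together with the symmetry of $P1$ under relabeling of beams; once these are in place both inequalities are short.

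First I would dispatch the easy direction $\mathrm{OPT}(P1) \le \mathrm{OPT}(P2)$. Any feasible solution of $P2$ is an allocation of the remaining users $\N \setminus \N_{\textrm{ind}}$ which, combined with the fixed pre-assignment of the users of $\N_{\textrm{ind}}$ to $|\N_{\textrm{ind}}|$ distinct beams, satisfies C1--C4. This combined assignment is by construction a feasible point of $P1$ achieving the same objective value $\sum_{b} z_b$. Hence every value attainable in $P2$ is attainable in $P1$, so $\mathrm{OPT}(P1) \le \mathrm{OPT}(P2)$.

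The substance is the reverse direction $\mathrm{OPT}(P1) \ge \mathrm{OPT}(P2)$, and the key observation is that in \emph{any} feasible solution of $P1$ the users of $\N_{\textrm{ind}}$ occupy pairwise distinct beams. Indeed, since $\N_{\textrm{ind}}$ is an independent set of $G$, every pair $i,j \in \N_{\textrm{ind}}$ satisfies $(i,j) \notin E$, so constraint C3 forces $a_{ib} + a_{jb} \le 1$ for every beam $b$; thus no beam can host two users of $\N_{\textrm{ind}}$. Now I take an optimal solution of $P1$: it assigns the $|\N_{\textrm{ind}}|$ users of $\N_{\textrm{ind}}$ to $|\N_{\textrm{ind}}|$ distinct active beams. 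Because $P1$ is invariant under any permutation of the beam indices $\mathcal{B}$ --- the objective $\sum_{b} z_b$ and all of C1--C5 are symmetric in $b$ --- I may relabel the beams so that each user of $\N_{\textrm{ind}}$ ends up exactly in the beam prescribed by the pre-assignment. The relabeled solution keeps the same objective value, still satisfies every constraint, and now respects the pre-assignment, so it is feasible for $P2$. Therefore $\mathrm{OPT}(P2) \le \mathrm{OPT}(P1)$, and combining with the first direction gives $\mathrm{OPT}(P1) = \mathrm{OPT}(P2)$.

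The one step to argue with care --- and the place where a reader might want more detail --- is the relabeling. Both the map sending each user of $\N_{\textrm{ind}}$ to its beam in the optimal $P1$ solution and the map given by the pre-assignment are injections from $\N_{\textrm{ind}}$ into $\mathcal{B}$ of the same cardinality, so they extend to a bijection of $\mathcal{B}$; applying this single permutation realizes the matching. Feasibility of the remaining users is then preserved automatically, since C1--C5 depend on $b$ only through its label and are invariant under the permutation. I expect this symmetry bookkeeping, rather than any inequality, to be the only genuinely delicate point.
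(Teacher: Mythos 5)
Your proposal is correct and rests on exactly the same key observation as the paper's (much terser) proof: since $\N_{\textrm{ind}}$ is an independent set of $G$, constraint C3 forces its users into pairwise distinct beams in \emph{every} feasible solution of $P1$, so the pre-assignment loses nothing. You simply make explicit the two inequalities and the beam-relabeling symmetry that the paper's one-sentence argument leaves implicit.
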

\begin{proof}
Assigning one separate beam to each user corresponding to vertices in an independent set doesn't affect the optimality of the solution because these users must be allocated to different beams in any feasible solution.
\end{proof}

\subsubsection{Linear Relaxation}
%From P2, we consider its linear relaxation problem $P2'$.
We consider a linear relaxation $P2'$ of $P2$, which can be efficiently solved by simplex method \cite{nelder1965simplex}, given below,
\begin{align}
    P2':\min_{\{z_b\},\{a_{ib}\}} ~~& \sum_{b\in\mathcal{B}} z_b,\\
 \nonumber   \textrm{s.t}~~ & (\ref{constraint:C1}) - (\ref{constraint:C4}), \\
    & z_b, a_{ib} \in [0,1], \forall i,j \in \N \setminus \N_{\textrm{ind}}.
\end{align}
%{\color{red} Thinh: Hoang! Please help me to conduct a theory that the optimal value of Linear Relaxation of $P3$ is also the lower bound of $P1$.}

\begin{theorem}
Let $\mathrm{OPT}(P2')$ denote the optimal value of $P2'$. Then $\mathrm{OPT}(P2') \leq \mathrm{OPT}(P2) = \mathrm{OPT}(P1)$.
\end{theorem}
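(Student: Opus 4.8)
The plan is to split the claimed chain into its two pieces: the inequality $\mathrm{OPT}(P2') \leq \mathrm{OPT}(P2)$, and the equality $\mathrm{OPT}(P2) = \mathrm{OPT}(P1)$. The second piece is already in hand, since it is exactly the content of Lemma~\ref{lemma:max_ind_set}, so I would simply cite it. The real work (such as it is) lies in the first inequality, and the strategy is the standard relaxation argument: exhibit an inclusion of feasible regions and conclude that minimizing over the larger set cannot yield a larger optimum.

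First I would observe that $P2$ and $P2'$ share the identical objective $\sum_{b\in\mathcal{B}} z_b$ and the identical constraint families $(\ref{constraint:C1})$--$(\ref{constraint:C4})$; the \emph{only} difference is the domain of the variables, namely $z_b, a_{ib} \in \{0,1\}$ in $P2$ versus $z_b, a_{ib} \in [0,1]$ in $P2'$. Since $\{0,1\} \subseteq [0,1]$, every vector that is feasible for $P2$ automatically satisfies every constraint of $P2'$. Hence the feasible region of $P2$ is a subset of the feasible region of $P2'$.

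Next I would invoke the elementary fact that, for a fixed objective, the infimum over a set is monotone decreasing under set inclusion: enlarging the feasible region can only decrease (or leave unchanged) the minimum value. Applying this to the inclusion just established yields $\mathrm{OPT}(P2') \leq \mathrm{OPT}(P2)$. Chaining this with the equality $\mathrm{OPT}(P2) = \mathrm{OPT}(P1)$ supplied by Lemma~\ref{lemma:max_ind_set} gives the full statement $\mathrm{OPT}(P2') \leq \mathrm{OPT}(P2) = \mathrm{OPT}(P1)$.

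I do not anticipate a genuine obstacle here, as this is a textbook linear-programming-relaxation bound; the only point requiring a moment's care is confirming that the constraint sets really are term-for-term identical (so that no feasible point of $P2$ is accidentally excluded from $P2'$), which is immediate from the way $P2'$ is written as $P2$ with the integrality requirement dropped. The lone caveat worth a sentence in the writeup is that the inequality could in principle be strict, since the relaxed optimum need not be attained at an integral point; the theorem only asserts ``$\leq$'', which is exactly what the inclusion argument delivers.
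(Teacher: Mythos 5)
Your proof is correct and follows essentially the same route as the paper's: the paper likewise invokes Lemma~\ref{lemma:max_ind_set} for the equality $\mathrm{OPT}(P2)=\mathrm{OPT}(P1)$ and cites the fact that $P2'$ is the linear relaxation of $P2$ for the inequality. Your writeup merely spells out the feasible-region-inclusion argument that the paper leaves implicit, which is a fine (and slightly more explicit) presentation of the same idea.
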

\begin{proof}
Since $P2'$ is the linear relaxation formulation of $P2$, $\mathrm{OPT}(P2') \leq \mathrm{OPT}(P2)$. From Lemma \ref{lemma:max_ind_set}, $\mathrm{OPT}(P1)= \mathrm{OPT}(P2)$. Thus, $\mathrm{OPT}(P2') \leq \mathrm{OPT}(P2) = \mathrm{OPT}(P1)$.
\end{proof}
\begin{remark} 
Let $P1'$ and $\mathrm{OPT}(P1')$ denote the linear relaxation formulation of $P1$ and its optimal value.  Given an algorithm $\mathrm{ALG}$ with its returned value $v^{\mathrm{ALG}}$, 
\begin{align}
    \frac{v^{\mathrm{ALG}}}{\mathrm{OPT}(P1)} \leq \frac{v^{\mathrm{ALG}}}{\mathrm{OPT}(P2')} \leq \frac{v^{\mathrm{ALG}}}{\mathrm{OPT}(P1')}. 
\end{align}
%Since it is NP-hard to find $\mathrm{OPT}(P1)$,
Thus, by solving $P2'$, which has fewer unknowns than $P1'$, we obtain a tighter bound for the approximation ratio $\frac{v^{\mathrm{ALG}}}{\mathrm{OPT}(P1)}$.
%$\frac{\mathrm{ALG}(\mathbf{x})}{\mathrm{OPT}(P2')}$ is a tighter bound to evaluate $\frac{\mathrm{ALG}(\mathbf{x})}{\mathrm{OPT}(P1)}$.
\end{remark}
\begin{proof}
Since the domain of $P2$ is a subset of the domain of $P1$, the domain of $P2'$ is a subset of $P1'$. As a result, $\mathrm{OPT}(P1') \leq \mathrm{OPT}(P2') \leq \mathrm{OPT}(P1)$.
\end{proof}

\subsection{Reduced Hamiltonian Formulation}
After solving $P2'$, we obtain an optimal solution $\Bar{\mathbf{x}}$. 
%as the optimal solution of $P2'$. 
Users $i$ that have $a_{ib}=1$ for some $b$ are allocated to that beam $b$.
%We keep users whose decision variables are binary. 
%Let $\Bar{\mathcal{N}}$ be the set of users whose decision variables are fractional in $P2'$ 
Let $\Bar{\mathcal{N}}$ be the set of the unassigned users, i.e., users $i$ with $a_{ib}\neq 1$ for every $b\in\mathcal{B}$. Let $\tilde{\mathcal{B}}$ be the set of active beams ($z_b=1$). Due to the symmetry of beam variables in the problems, we could re-index the active beams from $1$ to $|\tilde{\mathcal{B}}|$. We then construct the Hamiltonian that is 
%an approximated 
a reduced version of the QUBO formulated from $P1$ as follows.

%To do so, after knowing that certain beams are activated, 
The maximum number of more beams that could be active (other than the first $|\tilde{\mathcal{B}}|$ beams) is $\Bar{B} = \min \{B - \tilde{\mathcal{B}}, \Bar{\mathcal{N}}\}$. Thus, we only include $z_b, b = |\tilde{\mathcal{B}}|+1,\ldots,\Bar{B}$ in the reduced Hamiltonian. Similarly, for beam association variables, we only include $a_{ib},i \in \Bar{\mathcal{N}} , b = \tilde{\mathcal{B}}+1,\ldots,\Bar{B}$. For slack variables corresponding to active beam $b \in \tilde{\mathcal{B}}$ whose users are smaller than $W$, we only keep slack variables corresponding to the remaining capacity of that beam. Let $\overline{W}_b$ denote the remaining capacity of beam $b$. For example, if the remaining capacity of a given beam $b'$ is $3$, i.e, $\overline{W}_{b'} = 3$,  we only keep slack variables $s_{b'1},s_{b'2}$ and $s_{b'3}$. To further reduce the number of variables in the reduced Hamiltonian, we eliminate all slack variables corresponding to $b=|\tilde{\mathcal{B}}|+1,\ldots,\Bar{B}$ by assuming that the largest clique of the remaining users has size at most $W$.
Note that in some cases where this assumption is violated, the returned solution is infeasible, which reduces the success probability of QA's solutions (see Fig.~\ref{fig::success}). 
%Thus, we eliminate those slack variables. 
%Using those indices, we subset rows and columns of $Q$ corresponding to those indices.
The matrix $Q'$ corresponding to the reduced Hamiltonian $H_P'$ is then created by selecting the rows and columns of $Q$ corresponding to the aforementioned variables.  

The reduced Hamiltonian $H_P'$ is sent to D-Wave via its API. Once API returns results, we choose the solution with the lowest energy level. %Then, from those variables, 
We then merge the newly solved variables together with those determined earlier when solving $P2'$ to obtain a complete solution to $P1$ (see Algorithm~\ref{algorithm:annealing_with_reduced_Q}). %$\Bar{\mathbf{x}}$ using the set of indices.
		\begin{algorithm}    [t]                % enter the algorithm environment
			\caption{Quantum Annealing with Reduced Hamiltonian}          % give the algorithm a caption
			\label{algorithm:annealing_with_reduced_Q}                           % and a label for \ref{} commands later in the document
			{%\footnotesize		
				\begin{algorithmic}[1]                    % enter the algorithmic environment
					\Require $\Bar{\mathbf{x}}$, $Q$, where $\Bar{\mathbf{x}}$ is an optimal solution of $P2'$
					\Ensure $\mathbf{x}^{\textrm{QA}}$ - a solution to $P1$
					\State Construct $\Bar{\mathcal{N}}$ - the set of unassigned users
					\State Construct $Q'$ by selecting rows and columns of $Q$ corresponding to $z_b, b = |\tilde{\mathcal{B}}|+1,\ldots,\Bar{B}$, $a_{ib},i \in \Bar{\mathcal{N}} , b = \tilde{\mathcal{B}}+1,\ldots,\Bar{B}$ and $s_{bw}, w = 1, \ldots, \overline{W}_{b}, b \in \tilde{\mathcal{B}}$
					\State Once D-Wave has converted $Q'$ to $H_P'$, send $H_P'$ to D-Wave Advantage processor
					\State Once received D-Wave results, return $\mathbf{x}^{\textrm{QA}}$, which is obtained by combining the results of D-Wave with  $\Bar{\mathbf{x}}$ %based on the set of indices,
				\end{algorithmic}}
			\end{algorithm}
\section{Numerical Results}

	In this section, we carry out  simulations to evaluate the
		performance of the proposed frameworks. To model the users, we make use of real vessel locations collected by the U.S. Coast Guard on the first of January 2022 \cite{Vesseldata}. We randomly sample locations from unique vessels in the region of Gulf of Mexico with the sample size from $25$ to $200$. We also set the altitude of the satellite at $1110$ km, which is the most popular LEO altitude of Starlink \cite{pachler2021static}. The latitude and longitude of satellite are $26.812309$ and $-85.386382$, respectively. The maximum number of beams is set at the number of users. The beam's capacity $W$ is set at $20$. For each value of number of users, we will sample $200$ realizations. 
					\begin{table}[t] \footnotesize			
						\caption{Default Parameter Setup.} \label{tab:para_setup_2}
						\begin{center}
							%\rowcolors{2}%{green!10!yellow}{}
							%{cyan!15!}{}
							%\renewcommand{\arraystretch}{1.1}											
							\begin{tabular}{|c|c| }
								\hline
								%--------------------------------
								{\bf Parameter} & {}{\bf Value} \\  \hline
								$N$ & $25$--$200$  \\  
								Satellite's altitude & $1110$ km \\  
								Satellite's latitude and longitude & $26.812309$, $-85.386382$ \\  
								Number of realization & $200$ \\ 
								\hline	
							\end{tabular}
						\end{center}% 
                        \vspace{-10pt}
					\end{table}%		
	\begin{figure}[t]
		\centering
		{\includegraphics[width=0.9\linewidth]{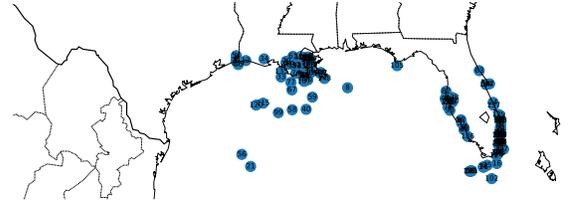}}
		\caption{Locations of $150$ random vessels at U.S south coast. \label{fig::map} }
	\end{figure}
 
\subsection{Reduction Ratio}
	\begin{figure}[t]
		\centering
		{\includegraphics[width=0.7\linewidth]{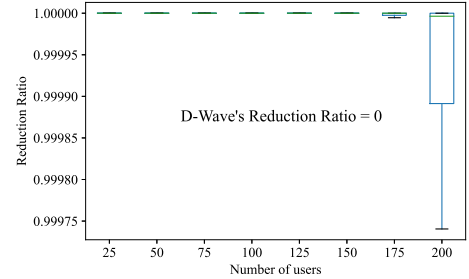}}
		\caption{Distribution of Reduction Ratio of the proposed scheme. In our experiments with 200 users, in $75\%$ of the trials, the reduction ratios are greater than $99\%$. On the contrary, D-Wave Roof Duality provides no reduction. \label{fig::reduction} }
  \vspace{-10pt}
	\end{figure}
In this part, we evaluate how much the size of the Hamiltonian is reduced. We compare the proposed scheme with the Roof Duality implemented in D-Wave Ocean SDK \cite{Dwave}. Here, the reduction ratio of the problem is defined as 
\begin{align}
    \textrm{Reduction ratio} = 1 - 
    \frac{S'}{S},
\end{align}
where $S'$ and $S$ are the number of logical qubits for the reduced Hamiltonian and the original one, respectively. In order words, $S \times S$ and $S' \times S'$ are the size of the original and the reduced Hamiltonian, respectively. We %would like to highlight that 
observe that in our experiment, D-Wave Roof Duality does not reduce any variables. %One possible explanation is  
By contrast, as can be seen in Fig. \ref{fig::reduction}, our variable presolve step obtains significant reduction ratios. We also observe that when the number of users increases, the reduction ratio of the reduction scheme decreases, which is perhaps due to the   increase of the problem's complexity.
%fact that the size of the independent set does not grow \red{proportionally} as we pick more users. 
%This phenomenon, which was also noted in~\cite{ThaiCQE2022}
However, the first quantile of reduction ratio is always larger than $99\%$ in the simulation settings.

\subsection{Solution's Feasibility}
	\begin{figure}[t]
		\centering
		{\includegraphics[width=0.8\linewidth]{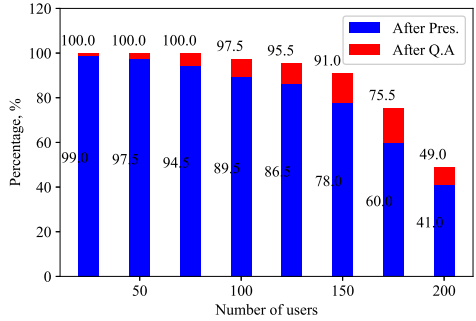}}
		\caption{Success probability to achieve feasible solutions of our scheme. Note that D-Wave BQM and CQM cannot solve those instances without variable reduction methods due to hardware limit.  \label{fig::success} }
  \vspace{-10pt}
	\end{figure}
In this part, we would like to evaluate the feasibility of solutions returned by our proposed scheme. Here, the \textit{success probability} is defined as the chance the final solution is feasible in the original problem $P1$. In Fig. \ref{fig::success}, cases in which users are completely assigned by solving $P2'$ without the need of using QA are colored in blue and labeled as ``Pres.'' while cases where QA is used to decide unsigned users are colored in red and labeled ``Q.A''. We observe that our variable presolve scheme can solve most of the network instances with number of users smaller than $175$. Moreover, by using QA to solve the undecided variables, we can solve up to $15.5\%$ of the network instances that the reduction scheme does not completely solve. We note that D-Wave Binary Quadratic Model (BQM) and Constrained Quadratic Model (CQM) cannot solve those instances due to their large number of variables. We observe that as the problem's complexity increases when the number of users increases, the chance that the derived solution in our scheme is feasible decreases, especially when there are more than $175$ users. Still, a $175$-user instance is already $17.5\times$ greater than the toy example that D-Wave can handle (only $10$ users) discussed in Section III.C. Further improvements of our scheme are left to the future works.
\subsection{Algorithm Performance}
% 	\begin{figure}[h]
% 		\centering
% 		\subfigure[]{\includegraphics[width=0.7\linewidth]{}}
% 		\subfigure[]{\includegraphics[width=0.7\linewidth]{}}		
% 		\caption{Algorithms' performance and the problem complexity. \label{fig::performance} }
% 	\end{figure}
		\begin{figure}[t]
		\centering
\includegraphics[width=0.9\linewidth]{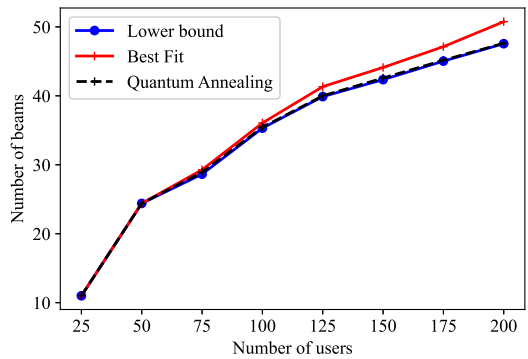}
		\caption{The performance comparison of our proposed algorithm and Best Fit~\cite{silberschatz2018operating}. Our algorithm achieves near-optimal solutions and outperforms Best Fit (solved on a classical computer). \label{fig::performance} }
  \vspace{-10pt}
	\end{figure}
In this part, we would like to evaluate the performance of Quantum Annealing in comparison with a well-known classical algorithm, Best Fit \cite{silberschatz2018operating}. 
%To fit Best Fit in BP scenarios, the algorithm 
To solve BP, Best Fit simply
assigns a user to the most utilized beam that the user can be assigned to without violating any constraints. If the user can not be assigned to any existing beams, the satellite activates a new beam. Here, for fair evaluation, we only consider cases that are completely solved by Quantum Annealing. We remove cases where the reduction scheme can reach optimality. We also use the objective value of $P2'$ as the lower bound of the optimal beam placement. In Fig. \ref{fig::performance}, we can see that for the cases where Quantum Annealing's solutions are feasible, their objective values can reach near-optimal quality. Moreover, we also observe that the gap between Quantum Annealing and Best Fit increases as the number of users increases.

\section{Conclusion}
In this paper, we use a quantum computer to solve the Beam Placement problem and evaluate the performance on a real-world dataset of vessel locations in the US.
We first construct a QUBO formulation for the problem as an input of quantum annealers. Due to quantum annealers' variable limit,
we propose a Hamiltonian reduction method to reduce the number of logical qubits required. 
First, our reduction method significantly reduces the number of logical qubits required to solve the equivalent QUBO of the original problem with the first quantile of reduction ratio more than $99\%$ while D-Wave Roof Duality offers no reduction. Second, the experiments show that there is a high chance to reach feasible solutions using our reduction method and Quantum Annealing while D-Wave BQM and CQM cannot solve the instances with more than $25$ users. Finally, we note that feasible solutions achieved by our method outperform Best Fit, a well-known classical method. We hope that our work helps trigger further developments on applying quantum computing for satellite communications problems, especially those with clique covering/graph coloring formulations.

\section{Acknowledgement}
This work has been supported by the Luxembourg National Research Fund (FNR) under the project FlexSAT (C19/IS/13696663) and the project MegaLEO (C20/IS/14767486).
		\bibliographystyle{IEEEtran}
		\bibliography{IEEEabrv,references}
\end{document}